\newtheorem{theorem}{\indent Theorem}
\newtheorem{lemma}{\indent Lemma}
\newtheorem*{proof}{\indent Proof}
\newtheorem{remark}{\indent Remark}
\newcommand{\Rmnum}[1]{\expandafter\@slowromancap\romannumeral #1@}
\begin{document}

\title{Imperfect CSI: A Key Factor of Uncertainty to Over-the-Air Federated Learning}

\author{Jiacheng~Yao,~\IEEEmembership{Graduate~Student~Member,~IEEE},
		Zhaohui~Yang,~\IEEEmembership{Member,~IEEE},
		Wei~Xu,~\IEEEmembership{Senior~Member,~IEEE},
		Dusit~Niyato,~\IEEEmembership{Fellow,~IEEE},
		and Xiaohu~You,~\IEEEmembership{Fellow,~IEEE}
		
\thanks{J. Yao, W. Xu, and X. You are with the National Mobile Communications Research Laboratory (NCRL), Southeast University, Nanjing 210096, China (\{jcyao, wxu, xhyu\}@seu.edu.cn).}
\thanks{Zhaohui Yang is with the Zhejiang Lab, Hangzhou 311121, China, and also with the College of Information Science and Electronic Engineering, Zhejiang University, Hangzhou, Zhejiang 310027, China (yang\_zhaohui@zju.edu.cn).}
\thanks{Dusit Niyato is with the School of Computer Science and Engineering, Nanyang Technological University, Singapore 308232 (dniyato@ntu.edu.sg).}
}

%

\maketitle

\begin{abstract}
Over-the-air computation (AirComp) has recently been identified as a prominent technique to enhance communication efficiency of wireless federated learning (FL). This letter investigates the impact of channel state information (CSI) uncertainty at the transmitter on an AirComp enabled FL (AirFL) system with the truncated channel inversion strategy. To characterize the performance of the AirFL system, the weight divergence with respect to the ideal aggregation is analytically derived to evaluate learning performance loss. We explicitly reveal that the weight divergence deteriorates as $\mathcal{O}(1/\rho^2)$ as the level of channel estimation accuracy $\rho$ vanishes, and also has a decay rate of $\mathcal{O}(1/K^2)$ with the increasing number of participating devices, $K$. Building upon our analytical results, we formulate the channel truncation threshold optimization problem to adapt to different $\rho$, which can be solved optimally. Numerical results verify the analytical results and show that a lower truncation threshold is preferred with more accurate CSI.

\end{abstract}

\begin{IEEEkeywords}
Federated learning (FL), over-the-air computation (AirComp), imperfect channel state information (CSI)
\end{IEEEkeywords}

\section{Introduction}
\IEEEPARstart{F}{ederated} learning (FL), a distributed machine learning paradigm, has been regarded as a promising technique to support ubiquitous intelligence in the beyond fifth-generation (B5G) wireless networks \cite{xu, push}. In a wireless FL system, the distributed devices, orchestrated by a parameter server (PS), iteratively train a shared learning model through the exchange of model parameters rather than the raw data, thereby protecting data privacy \cite{energy, ajoint}. However, due to the frequent uplink transmissions of model parameters from a large number of devices, the communication overhead and latency of FL become excessively high, which hinders its deployment in resource-constrained wireless networks.

To facilitate communication-efficient FL design, over-the-air computation (AirComp) has been greatly adopted for effective uplink model transmission \cite{zhu,zhu2,guo}. By exploiting the waveform superposition nature of multiple access (MAC) channels, simultaneous model transmission and over-the-air model aggregation can be achieved, which can reduce the communication latency and save the uplink communication bandwidth substantially. In \cite{zhu}, a truncated channel inversion scheme was proposed to combat deep fadings in an AirComp-aided FL (AirFL), and the fundamental trade-offs between communication and learning was discussed. Then in \cite{zhu2}, the power control strategy was further optimized to alleviate the impacts brought by AirComp errors. Considering the constraint of limited wireless communication resources, device selection and power control were jointly optimized to minimize the accuracy loss for AirFL in \cite{guo}.

However, most of the existing AirFL scheme design and resource allocation optimization optimistically assumed the availability of perfect channel state information (CSI) at the transmitter, which is hardly to acquire in practice especially in a wireless network. 
More importantly, unlike traditional communication systems, the CSI imperfection in the AirFL system brings a severe impact. To be concrete, considering transmit power constraints, the users in deep fading should be truncated and therefore not participate in AirComp. Moreover, to achieve the uniform model aggregation, channel inversion should be performed at the transmitter.
Considering the CSI uncertainties, the model aggregation is perturbed due to the imperfect  truncation decision and  channel inversion, resulting in the deterioration in learning performance. In \cite{onebit}, the authors considered a bounded CSI error and analyzed the impact of imperfect CSI on the convergence rate of FL. However, few effort has been endeavored to explicitly analyze in theory the aggregation distortion and accuracy loss brought by CSI uncertainty. To the best of our knowledge, there is no theoretical guidance on channel truncation strategy of imperfect CSI.

Against the above backgrounds, we focus on an AirFL system adopting the truncated channel inversion scheme, where only partial CSI is available at the PS. We theoretically characterize the aggregation distortion due to imperfect CSI and the corresponding channel truncation, and derive an upper bound of the weight divergence of the aggregated gradient to evaluate the degradation of learning performance. Our results unrevil that as the level of channel estimation accuracy $\rho$ decreases, the weight divergence enlarges at the order of $1/\rho^2$. The analytical result further suggests that increasing the number of participating devices, $K$, help decrease the weight divergence as $\mathcal{O}(1/K^2)$ and can completely eliminate the impact of CSI imperfection.
Moreover, based on the derived analytical results, we derive the optimal truncation threshold as a function of channel estimation uncertainty and system SNR. Numerical tests are conducted to verify the effectiveness of performance analysis and truncation threshold optimization.

\section{System Model of AirFL}

\subsection{Federated Learning Model}
We consider a typical FL algorithm, where a shared machine learning model is trained via the collaboration between a central PS and $K$ distributed devices. Let $\mathcal{D}_k$ denote the local dataset owned by the $k$th device.  The local loss function of model parameters, $\bm{w}$, at the device $k$ is defined as
\begin{align}
F_k(\bm{w},\mathcal{D}_k)=\frac{1}{\left \vert \mathcal{D}_k\right \vert}\sum_{\bm{u}\in \mathcal{D}_k} \mathcal{L}(\bm{w},\bm{u}),
\end{align}
where $\bm{u}$ is a data sample and $\mathcal{L}(\bm{w},\bm{u})$ represents the sample-wise loss function. Without loss of generality, we assume that the size of all local datasets is the same, i.e., $\left \vert \mathcal{D}_k\right \vert=D$, $\forall k$. Then, the global loss function over all the datasets is given by
\begin{align}\label{e1}
F(\bm{w})=\frac{1}{K}\sum_{k=1}^K  F_k(\bm{w},\mathcal{D}_k).
\end{align}
The goal of the FL is to find the optimal model parameters, denoted by $\bm{w}^*$, to minimize the global loss function in (\ref{e1}). 

To effectively handle this problem, we apply the widely used FL algorithm in \cite{onebit}.  Specifically, in the $m$th round of the FL algorithm, the PS firstly broadcasts the up-to-date global parameter $\bm{w}_{m}$ to all devices. With the received global model $\bm{w}_{m}$ and their local datasets, each device runs a stochastic gradient descent (SGD) algorithm on a local mini-batch to compute the local gradient, which follows
\begin{align}\label{e3}
\bm{g}_{m}^k\triangleq  \nabla F_k\left(\bm{w}_m,\mathcal{D}_{k,m}\right)=\frac{1}{\left\vert\mathcal{D}_{k,m} \right \vert}\sum_{\bm{u}\in \mathcal{D}_{k,m}} \mathcal{L}(\bm{w}_m,\bm{u}),
\end{align}
where $\mathcal{D}_{k,m}$ is the mini-batch selected from $\mathcal{D}_k$. Next, all devices report the local gradients in (\ref{e3}) to the PS.
Upon receiving all the local gradients, PS performs the update as
\begin{align}\label{e4}
\bm{w}_{m+1}=\bm{w}_{m}-\eta \bm{g}_m,
\end{align}
where $\eta$ denotes the learning rate and
\begin{align}\label{gm}
\bm{g}_m\triangleq \frac{1}{K}\sum_{k=1}^K \bm{g}_m^k.
\end{align}
The FL algorithm iterates (\ref{e3}) and (\ref{e4}) until convergence.

\begin{figure}[!t]
\centering
\includegraphics[width=4.5in]{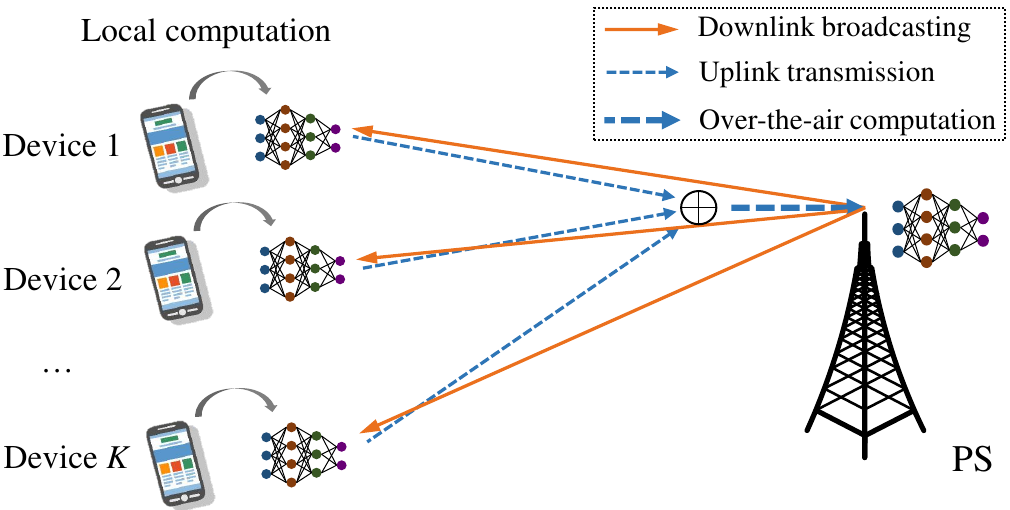}
\caption{An architecture of AirFL with  one PS and $K$ devices.}\label{fig1}
\end{figure}

\subsection{Over-the-Air Computation for FL}
In practice, we adopt the AirComp method for model aggregation in wireless networks, shown in Fig. \ref{fig1}. We express the channel between the $k$th device and the PS as $d_k^{-\frac{\alpha}{2}}h_k$, where $d_k$ denotes the distance between the PS and device $k$, $\alpha$ is the large scale path loss exponent, and $h_k$ represents the small-scale fading of the channel. Assume that the channels are independent Rayleigh fading channels, i.e., $h_k\sim \mathcal{CN}(0,1)$. In general, the small-scale fading of the channel cannot be perfectly estimated at devices. Denote the channel estimate at device $k$ by $\hat{h}_k$ and a relationship between $h_k$ and $\hat{h}_k$ can be modelled as 
$h_k = \rho \hat{h}_k + \sqrt{1-\rho^2}v_k$, 
where $\rho\in(0,1]$ represents the correlation coefficient between $h_k$ and $\hat{h}_k$, and $v_k\sim \mathcal{CN}(0,1)$ is the error independent of $\hat{h}_k$. Note that $\rho$ directly corresponds to the level of channel estimation accuracy and $\rho=1$ implies the availability of perfect CSI. To overcome the negative impact of deep fading, a must truncated channel inversion scheme is headed for the uplink transmission \cite{zhu}. To be concrete, only when $\vert \hat{h}_k \vert^2 $ exceeds a predetermined threshold, $\gamma_{\mathrm{th}}$, the device is activated to transmit its gradient to PS. Accordingly, the received signal at the PS follows
\begin{align}
\bm{y}=\sum_{k\in \mathcal{S}_m} d_k^{-\frac{\alpha}{2}} h_k \beta_k \bm{g}_m^k +\bm{z}_m,
\end{align} 
where $\mathcal{S}_m$ represents the set of activated devices, $\beta_k$ is the pre-processing factor for device $k$, and $\bm{z}_m\sim \mathcal{CN}(\bm{0},\sigma^2\bm{I})$ is the additive Gaussian noise with power $\sigma^2$. Given the imperfect CSI, the pre-processing factor for device $k$ is chosen as $\beta_k =\frac{\zeta \lambda d_k^{\alpha/2} \hat{h}_k^*}{K \vert \hat{h}_k \vert^2}$ \cite{guo}, where $\zeta$ is a scaling factor for ensuring the transmit power constraint and $\lambda$ is a compensation constant for ensuring unbiasedness of the gradient estimation. For simplicity, we consider the uniform transmit power budget $P_{\max}$ at each device and choose the factor $\zeta$ to guarantee
\begin{align}\label{pmax}
\mathbb{E}\left [\left \Vert \beta_k \bm{g}_m^k \right \Vert^2 \right]\leq P_{\max}.
\end{align}
At the receiver, by scaling $\bm{y}$ with $\frac{1}{\zeta}$ and taking the real part, an estimate of the actual gradient in (\ref{gm}) is given by
\begin{align} \label{ghat}
\hat{\bm{g}}_m=\frac{1}{K} \sum_{k=1}^K \xi_k \bm{g}_m^k +\bar{\bm{z}}_m,
\end{align}
where $\bar{\bm{z}}_m\!\triangleq \!\frac{\Re\{\bm{z}_m\}}{\zeta}$ is the equivalent noise, and $\xi_k$ is given by
\begin{align}
\xi_k =\left \{  
\begin{array}{cc}
\lambda\frac{\Re\{h_k^* \hat{h}_k\}}{\vert \hat{h}_k \vert^2} & \vert \hat{h}_k \vert^2\geq \gamma_{\mathrm{th}},\\
0&\vert \hat{h}_k \vert^2<\gamma_{\mathrm{th}}.
\end{array}
\right.
\end{align}
By comparing (\ref{ghat}) and (\ref{gm}),  the distortion in the gradient estimation comes from two aspects, i.e., the coefficient distortion $\xi_k$ caused by the imperfect CSI, and the scaled additive Gaussian noise $\bar{\bm{z}}_m$. Also, we notice that the expectation of $\xi_k$ determines whether the gradient estimation is unbiased, and the variance of $\xi_k$ and $\bar{\bm{z}}_m$ measure the gradient estimation distortion, which brings notable deterioration in convergence performance \cite{ajoint}.

\section{Performance Analysis and Optimization}
In this section, we theoretically capture the  impact of the imperfect CSI on the performance of AirFL. Based on the analytical results, we further optimize the truncation threshold $\gamma_{\mathrm{th}}$ with respect to $\rho$ and system SNR.

\vspace{-0.2cm}
\subsection{Performance Analysis for AirFL}

Firstly, we need to determine the value of the compensation constant $\lambda$ for achieving an unbiased gradient estimation.

\begin{lemma}
In order to ensure the unbiasedness of gradient transmission, the compensation constant truncation and imperfect CSI is chosen by $\lambda=\frac{\mathrm{e}^{\gamma_{\mathrm{th}}}}{\rho}$.
\end{lemma}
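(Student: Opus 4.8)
The plan is to reduce the unbiasedness requirement to a single scalar condition and then evaluate the resulting expectation in closed form. Comparing the gradient estimate (\ref{ghat}) with the true aggregate (\ref{gm}), and noting that the equivalent noise $\bar{\bm{z}}_m$ is zero-mean (since $\bm{z}_m\sim\mathcal{CN}(\bm{0},\sigma^2\bm{I})$), the unbiasedness condition $\mathbb{E}[\hat{\bm{g}}_m]=\bm{g}_m$ holds if and only if $\mathbb{E}[\xi_k]=1$ for every $k$, where the expectation is taken over the channel randomness $(\hat{h}_k,v_k)$ with the local gradients held fixed. Thus it suffices to compute $\mathbb{E}[\xi_k]$ and solve $\mathbb{E}[\xi_k]=1$ for $\lambda$.

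First I would substitute the imperfect-CSI model $h_k=\rho\hat{h}_k+\sqrt{1-\rho^2}v_k$ into the definition of $\xi_k$. Since $h_k^*\hat{h}_k=\rho|\hat{h}_k|^2+\sqrt{1-\rho^2}\,v_k^*\hat{h}_k$, taking the real part and dividing by $|\hat{h}_k|^2$ gives, on the activation event, $\frac{\Re\{h_k^*\hat{h}_k\}}{|\hat{h}_k|^2}=\rho+\sqrt{1-\rho^2}\,\frac{\Re\{v_k^*\hat{h}_k\}}{|\hat{h}_k|^2}$. The useful term is a clean constant $\rho$, while the second term carries all the estimation-error randomness. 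Conditioning on $\hat{h}_k$ renders both the activation indicator $\mathbf{1}\{|\hat{h}_k|^2\geq\gamma_{\mathrm{th}}\}$ and the denominator $|\hat{h}_k|^2$ deterministic; because $v_k$ is independent of $\hat{h}_k$ with $\mathbb{E}[v_k]=0$, the conditional mean of the cross term vanishes. Hence $\mathbb{E}[\xi_k\mid\hat{h}_k]=\lambda\rho\,\mathbf{1}\{|\hat{h}_k|^2\geq\gamma_{\mathrm{th}}\}$.

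It then remains to take the outer expectation over $\hat{h}_k$ via the tower property, so that $\mathbb{E}[\xi_k]=\lambda\rho\Pr(|\hat{h}_k|^2\geq\gamma_{\mathrm{th}})$. Since $\hat{h}_k\sim\mathcal{CN}(0,1)$, the squared magnitude $|\hat{h}_k|^2$ is unit-mean exponential, so its survival function evaluates to $\Pr(|\hat{h}_k|^2\geq\gamma_{\mathrm{th}})=\mathrm{e}^{-\gamma_{\mathrm{th}}}$. Setting $\lambda\rho\,\mathrm{e}^{-\gamma_{\mathrm{th}}}=1$ and solving yields $\lambda=\mathrm{e}^{\gamma_{\mathrm{th}}}/\rho$, as claimed.

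The only step demanding genuine care is the vanishing of the cross term. One cannot simply invoke independence of numerator and denominator, because the denominator $|\hat{h}_k|^2$ and the activation indicator are plainly correlated with each other; the correct argument is to condition on $\hat{h}_k$, which freezes everything except the independent zero-mean $v_k$ appearing in the numerator. Everything else is a routine density computation, so I expect no further obstacle.
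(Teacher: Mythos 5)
Your proof is correct and follows essentially the same route as the paper's Appendix A: both isolate the zero-mean error term $\Re\{v_k^*\hat h_k\}/|\hat h_k|^2$ via the substitution $h_k=\rho\hat h_k+\sqrt{1-\rho^2}\,v_k$, argue it vanishes in expectation because $v_k$ is independent of $\hat h_k$, and multiply the surviving constant $\lambda\rho$ by the exponential tail probability $\mathrm{e}^{-\gamma_{\mathrm{th}}}$. Your explicit conditioning on $\hat h_k$ (rather than only on the activation event) is a slightly cleaner justification of the vanishing cross term than the paper gives, but it is the same argument.
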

\begin{proof}
Please refer to Appendix A.
\hfill $\square$
\end{proof}

Then, to facilitate the performance analysis for AirFL, the following \emph{Lemma~2} derives the variance of AirComp parameters $\xi_k$, which directly reflects the mean squared error (MSE) of AirComp \cite{zhang}.

\begin{lemma}
The variance of $\xi_k$ with unit mean is given by
\begin{align}\label{e5}
\mathbb{E}\left[(\xi_k -1)^2  \right ]=\mathrm{e}^{\gamma_{\mathrm{th}}}-\frac{1-\rho^2}{2\rho^2}\mathrm{Ei}(-\gamma_{\mathrm{th}})\mathrm{e}^{2\gamma_{\mathrm{th}}}-1,
\end{align}
where $\mathrm{Ei}(\cdot)$ denotes the exponential integral function.
\end{lemma}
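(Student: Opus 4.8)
The plan is to exploit the unit-mean property established in \emph{Lemma~1}, so that the quantity of interest reduces to a second moment: since $\mathbb{E}[\xi_k]=1$, we have $\mathbb{E}[(\xi_k-1)^2]=\mathbb{E}[\xi_k^2]-1$, and the whole task becomes evaluating $\mathbb{E}[\xi_k^2]$ over the activated region $|\hat{h}_k|^2\geq\gamma_{\mathrm{th}}$. First I would substitute the channel error model $h_k=\rho\hat{h}_k+\sqrt{1-\rho^2}\,v_k$ into the definition of $\xi_k$. Writing $h_k^*\hat{h}_k=\rho|\hat{h}_k|^2+\sqrt{1-\rho^2}\,v_k^*\hat{h}_k$ and taking real parts, the active-region coefficient becomes
\begin{align}
\xi_k=\lambda\left(\rho+\sqrt{1-\rho^2}\,\frac{\Re\{v_k^*\hat{h}_k\}}{|\hat{h}_k|^2}\right),
\end{align}
which cleanly separates the deterministic part $\lambda\rho$ from the estimation-error contribution.

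The key probabilistic step is to decouple the error term from the channel magnitude. Because $v_k\sim\mathcal{CN}(0,1)$ is circularly symmetric and independent of $\hat{h}_k$, the normalized projection $G\triangleq\Re\{v_k^*\hat{h}_k\}/|\hat{h}_k|$ is, conditioned on $\hat{h}_k$, a zero-mean real Gaussian with variance $1/2$ whose law does not depend on $|\hat{h}_k|$; hence $G$ is independent of $|\hat{h}_k|$ with $\mathbb{E}[G]=0$ and $\mathbb{E}[G^2]=1/2$. Squaring $\xi_k$ and rewriting the error contribution as $(1-\rho^2)G^2/|\hat{h}_k|^2$, the cross term proportional to $\rho\sqrt{1-\rho^2}\,\mathbb{E}[G]$ vanishes by independence, leaving
\begin{align}
\mathbb{E}[\xi_k^2]=\lambda^2\left(\rho^2\,\mathbb{P}\!\left(|\hat{h}_k|^2\geq\gamma_{\mathrm{th}}\right)+\frac{1-\rho^2}{2}\,\mathbb{E}\!\left[\frac{\mathbf{1}\{|\hat{h}_k|^2\geq\gamma_{\mathrm{th}}\}}{|\hat{h}_k|^2}\right]\right).
\end{align}

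Finally I would evaluate the two remaining averages using the fact that $|\hat{h}_k|^2\sim\mathrm{Exp}(1)$ with density $\mathrm{e}^{-t}$. The tail probability is $\mathbb{P}(|\hat{h}_k|^2\geq\gamma_{\mathrm{th}})=\mathrm{e}^{-\gamma_{\mathrm{th}}}$, while the truncated inverse moment is $\int_{\gamma_{\mathrm{th}}}^{\infty}t^{-1}\mathrm{e}^{-t}\,\mathrm{d}t=-\mathrm{Ei}(-\gamma_{\mathrm{th}})$ by the definition of the exponential integral. Substituting $\lambda=\mathrm{e}^{\gamma_{\mathrm{th}}}/\rho$ from \emph{Lemma~1} so that $\lambda^2=\mathrm{e}^{2\gamma_{\mathrm{th}}}/\rho^2$, the $\rho^2$ in the first term cancels to give $\mathrm{e}^{\gamma_{\mathrm{th}}}$, and the second term becomes $-\frac{1-\rho^2}{2\rho^2}\mathrm{Ei}(-\gamma_{\mathrm{th}})\mathrm{e}^{2\gamma_{\mathrm{th}}}$; subtracting $1$ yields (\ref{e5}). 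I expect the main obstacle to be the decoupling argument of the second paragraph---justifying rigorously that the normalized projection $G$ is Gaussian with variance $1/2$ and statistically independent of $|\hat{h}_k|$, which hinges on the circular symmetry of $v_k$---since the rest is a mechanical pair of integrals once the correct sign convention for $\mathrm{Ei}(\cdot)$ is fixed.
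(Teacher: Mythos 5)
Your proof is correct, and it reaches (\ref{e5}) by a genuinely simpler route than the paper. The paper's Appendix B derives the full joint law of $x=\Re\{v_k^*\hat{h}_k\}/\vert\hat{h}_k\vert^2$ and $y=-\vert\hat{h}_k\vert^2$ by writing both as quadratic forms in $\bm{z}=[\hat{h}_k,v_k]^H$, invoking the joint MGF $\det(\bm{I}-s_1\bm{A}_1-s_2\bm{A}_2)^{-1}$, performing a double inverse Laplace transform with residue calculus and error functions to obtain the joint PDF $f_{xy}(t,\gamma)=\sqrt{-\gamma/\pi}\,\mathrm{e}^{\gamma(1+t^2)}U(-\gamma)$, and only then integrating. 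You bypass all of that with the single observation that $G=\Re\{v_k^*\hat{h}_k\}/\vert\hat{h}_k\vert$ has conditional law $\mathcal{N}(0,1/2)$ for every realization of $\hat{h}_k$ (by circular symmetry of $v_k$), hence is independent of $\vert\hat{h}_k\vert$; the cross term then dies because $\mathbb{E}[G]=0$, and the two surviving averages are the elementary exponential tail $\mathrm{e}^{-\gamma_{\mathrm{th}}}$ and the truncated inverse moment $-\mathrm{Ei}(-\gamma_{\mathrm{th}})$. Your factorization is in fact exactly what the paper's joint PDF encodes --- $f_{xy}$ factors as the $\mathrm{Exp}(1)$ density of $\vert\hat{h}_k\vert^2$ times a conditional Gaussian in $t$ with variance $1/(2\vert\hat{h}_k\vert^2)$ --- so the heavy machinery recovers, after several pages, a structure you identified at the outset. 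The only thing the paper's approach buys is the explicit joint CDF (\ref{a23}) as a standalone byproduct; for the lemma itself your argument is complete, and the obstacle you flagged (rigorous independence of $G$ from $\vert\hat{h}_k\vert$) is a standard and fully justified step. The bookkeeping also matches: your reduction $\mathbb{E}[(\xi_k-1)^2]=\mathbb{E}[\xi_k^2]-1$ via \emph{Lemma~1} absorbs the truncated region's contribution $\Pr\{\vert\hat{h}_k\vert^2<\gamma_{\mathrm{th}}\}$ that the paper carries separately, and both yield the same final expression.
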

\begin{proof}
Please refer to Appendix B.
\hfill $\square$
\end{proof}

\begin{remark}
Note that $\lim_{\gamma_{\mathrm{th}}\to 0}\mathrm{Ei}(\gamma_{\mathrm{th}})=-\infty$. Hence, from (\ref{e5}), regardless of the level of channel estimation accuracy, the truncation is necessary to avoid unbounded variance. 
\end{remark}

\begin{remark}
It is worth noting that for other power control schemes of AirComp, e.g., that in \cite{zhu2}, the method of theoretical analysis still applies through treating them as special truncated channel inversion schemes. Without loss of generality, we therefore consider the most commonly adopted truncated channel inversion scheme as a general analysis. 
\end{remark}

Next, to evaluate the accuracy loss caused by the imperfect model aggregation, we choose a popular performance metric as the expected weight divergence with respect to $\hat{\bm{g}}_m$ and $\bm{g}_m$ \cite{gomore}, defined by $\Delta^2=\mathbb{E}\left[\left \Vert\hat{\bm{g}}_m- \bm{g}_{m}\right \Vert^2 \right]$. It is worth noting that $\Delta^2$ corresponds to the MSE of the gradient estimation at the PS and directly reflects the accuracy of gradient estimation via the AirComp, which determines the convergence performance.
To pave the way for performance analysis, we make the following widely used assumption \cite{zhu2}.

\emph{Assumption}: The stochastic gradients on random batches are uniformly bounded, i.e., $\mathbb{E}\left[\left \Vert \bm{g}_m^k \right \Vert^2\right]\leq G^2$. And the obtained global gradient, $\bm{g}_m$, is unbiased and variance bounded, i.e.,
\begin{align}
&\mathbb{E}\left [ \bm{g}_m \right] =\nabla F (\bm{w}_m), \enspace \mathbb{E}\left [ \left \Vert \bm{g}_m -\nabla F (\bm{w}_m)\right \Vert \right] \leq \delta^2.
\end{align}
Then, the weight divergence can be accurately characterized under this general assumption.

\begin{theorem}
The weight divergence, $\Delta^2$, is bounded by
\begin{align}\label{e11}
\Delta^2\!\leq\!\frac{G^2}{K^2}\!\!\left(\!\!\mathrm{e}^{\gamma_{\mathrm{th}}}\!-\!\frac{1\!-\!\rho^2}{2\rho^2}\mathrm{Ei}(\!-\gamma_{\mathrm{th}}\!)\mathrm{e}^{2\gamma_{\mathrm{th}}}\!-\!1\!+\!\frac{  \sigma^2\!\max_k\!\left \{\!d_k^\alpha\!\right\}\mathrm{e}^{2\gamma_{\mathrm{th}}}}{ 2 P_{\max} \rho^2  \gamma_{\mathrm{th}}}\!\right)\!.
\end{align}
\end{theorem}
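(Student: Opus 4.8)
The plan is to start from the explicit error decomposition implied by (\ref{ghat}) and (\ref{gm}). Writing
\begin{align}
\hat{\bm{g}}_m-\bm{g}_m=\frac{1}{K}\sum_{k=1}^K(\xi_k-1)\bm{g}_m^k+\bar{\bm{z}}_m,
\end{align}
I would expand $\Delta^2=\mathbb{E}[\|\hat{\bm{g}}_m-\bm{g}_m\|^2]$ and split it into a signal-distortion part and a noise part. The first simplification is to kill all cross terms: the channel pairs $(h_k,\hat{h}_k)$ are independent across $k$ and independent of the gradients, and \emph{Lemma~1} guarantees $\mathbb{E}[\xi_k]=1$, so for $k\neq j$ the factor $\mathbb{E}[(\xi_k-1)(\xi_j-1)\langle\bm{g}_m^k,\bm{g}_m^j\rangle]$ splits into a product containing $\mathbb{E}[\xi_k-1]=0$; likewise the distortion--noise cross term vanishes because $\bm{z}_m$ is zero-mean and independent of everything else. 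What survives is a signal-distortion term of the form $\frac{1}{K^2}\sum_{k=1}^K\mathbb{E}[(\xi_k-1)^2]\,\mathbb{E}[\|\bm{g}_m^k\|^2]$, where I also used independence of $\xi_k$ and $\bm{g}_m^k$ to factor the expectation, plus the noise term $\mathbb{E}[\|\bar{\bm{z}}_m\|^2]$.

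For the signal-distortion part I would insert the variance from \emph{Lemma~2} and invoke the Assumption $\mathbb{E}[\|\bm{g}_m^k\|^2]\le G^2$; carried through the $1/K^2$ normalization of the averaging this reproduces the first three terms of the bracket, namely $\mathrm{e}^{\gamma_{\mathrm{th}}}-\frac{1-\rho^2}{2\rho^2}\mathrm{Ei}(-\gamma_{\mathrm{th}})\mathrm{e}^{2\gamma_{\mathrm{th}}}-1$, scaled by $G^2/K^2$.

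The noise part is where I expect the main difficulty. Since $\bar{\bm{z}}_m=\Re\{\bm{z}_m\}/\zeta$ with $\bm{z}_m\sim\mathcal{CN}(\bm{0},\sigma^2\bm{I})$, taking the real part halves the variance and gives $\mathbb{E}[\|\bar{\bm{z}}_m\|^2]=\sigma^2/(2\zeta^2)$, so the whole term is controlled by $\zeta$, which must be pinned down through the transmit-power budget (\ref{pmax}). Inserting $\beta_k=\frac{\zeta\lambda d_k^{\alpha/2}\hat{h}_k^*}{K|\hat{h}_k|^2}$ into (\ref{pmax}) yields $\mathbb{E}[\|\beta_k\bm{g}_m^k\|^2]=\frac{\zeta^2\lambda^2 d_k^\alpha}{K^2}\,\mathbb{E}\!\left[|\hat{h}_k|^{-2}\bm{1}\{|\hat{h}_k|^2\ge\gamma_{\mathrm{th}}\}\right]\mathbb{E}[\|\bm{g}_m^k\|^2]$. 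The key manoeuvre is to exploit the truncation by bounding $|\hat{h}_k|^{-2}\le\gamma_{\mathrm{th}}^{-1}$ on the active event, so that the awkward channel-inversion expectation collapses to the clean factor $1/\gamma_{\mathrm{th}}$ rather than yet another exponential-integral term; choosing $\zeta$ to meet the budget for the worst-case device $\max_k\{d_k^\alpha\}$ together with $\mathbb{E}[\|\bm{g}_m^k\|^2]\le G^2$ then fixes $\zeta^2$.

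Finally I would substitute $\lambda=\mathrm{e}^{\gamma_{\mathrm{th}}}/\rho$ from \emph{Lemma~1}, i.e.\ $\lambda^2=\mathrm{e}^{2\gamma_{\mathrm{th}}}/\rho^2$, which simultaneously produces the $\mathrm{e}^{2\gamma_{\mathrm{th}}}$ and the $1/\rho^2$ in the last summand, and add the two contributions to reach (\ref{e11}). The delicate bookkeeping is all in the noise term: extracting the factor $1/2$ from the real part, converting the channel-inversion expectation into $1/\gamma_{\mathrm{th}}$ via the truncation bound, and letting the worst-case path loss $\max_k\{d_k^\alpha\}$ and the compensation constant $\lambda$ propagate correctly; once these constants are settled, collecting them with the \emph{Lemma~2} contribution under the common prefactor $G^2/K^2$ gives the stated bound.
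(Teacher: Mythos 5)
Your proposal is correct and follows essentially the same route as the paper's Appendix C: the same decomposition into a distortion term (cross terms killed by $\mathbb{E}[\xi_k]=1$ and independence, then Lemma~2 plus $\mathbb{E}[\|\bm{g}_m^k\|^2]\le G^2$) and a noise term $\sigma^2/(2\zeta^2)$, with $\zeta$ fixed from the power budget using $|\hat{h}_k|^2\ge\gamma_{\mathrm{th}}$ on the active set, the worst-case $\max_k\{d_k^\alpha\}$, and $\lambda=\mathrm{e}^{\gamma_{\mathrm{th}}}/\rho$. The only cosmetic difference is that you bound the channel-inversion expectation inside (\ref{pmax}) by $1/\gamma_{\mathrm{th}}$, whereas the paper imposes the constraint per active realization and then applies the same truncation bound, yielding the identical $\zeta$.
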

\begin{proof}
Please refer to Appendix C.\hfill $\square$
\end{proof}
\begin{remark}
According to (\ref{e11}), the imperfect channel estimation deteriorates the learning performance with the order of $\frac{1}{\rho^2}$. It validates our statement that the accurate channel estimation is a key to the AirFL system.
\end{remark}
\begin{remark}
Especially for high SNR regime, i.e., $\frac{P_{\max}}{\sigma^2}\to \infty$, the weight divergence $\Delta^2$ is dominated by the impact of imperfect CSI rather than noise. It implies that, for a given level of channel estimation accuracy, the accuracy loss caused by imperfect CSI can no longer be compensated by increasing the transmit power while only weakens the impact of the noise. This is the key observation that is different from the impact of CSI error in pure communication systems for data recovery.
\end{remark}

\begin{remark}
By direct inspection of (\ref{e11}), as the increase of the number of devices, $K$, the weight divergence decreases as $\mathcal{O}\left( 1/K^2\right)$ and eventually tends towards zero, i.e., the impact of imperfect CSI is completely eliminated. This phenomenon can be qualitatively explained by the law of Large Numbers, that is, the randomness of aggregation distortion is eliminated when the participating devices tend to be infinite many. 
\end{remark}

Then, starting from (\ref{e11}), the convergence performance of FL is characterized in the following theorem.

\begin{theorem}
    Suppose the loss function $F$ is $L$-Lipschitz with respect to $\bm{w}$ and the learning rate satisfies $\eta<\frac{2}{L}$. The convergence of FL algorithm is bounded by
    \begin{align}
        \frac{1}{M}\sum_{m=0}^{M-1} \mathbb{E}\left[\left\Vert \nabla F(\bm{w}_m) \right \Vert^2 \right] \leq \frac{1}{M}\left(\frac{F(\bm{w}_0)-\mathbb{E}\left[ F(\bm{w}_{M})\right]}{\eta-\frac{L\eta^2}{2}} +\frac{ML\eta (\Delta^2+\delta^2)}{2-L\eta} \right). 
    \end{align}
\end{theorem}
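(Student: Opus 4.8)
The plan is to treat the AirComp output $\hat{\bm{g}}_m$ as a perturbed stochastic gradient and run the standard descent-lemma argument for smooth non-convex objectives, folding the aggregation distortion characterized by \emph{Theorem 1} into the effective gradient-noise budget. First I would invoke the $L$-smoothness of $F$ (the $L$-Lipschitz gradient condition) and substitute the update rule $\bm{w}_{m+1}=\bm{w}_m-\eta\hat{\bm{g}}_m$ into the quadratic upper bound to obtain the one-step descent inequality
\begin{align}
F(\bm{w}_{m+1}) \leq F(\bm{w}_m) - \eta \left\langle \nabla F(\bm{w}_m),\, \hat{\bm{g}}_m\right\rangle + \frac{L\eta^2}{2}\left\Vert \hat{\bm{g}}_m \right\Vert^2 .
\end{align}

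Next I would take the conditional expectation over the channel realizations and the mini-batch sampling at round $m$. Since \emph{Lemma 1} fixes the compensation constant $\lambda$ so that $\xi_k$ has unit mean and the equivalent noise $\bar{\bm{z}}_m$ is zero-mean, the estimate is unbiased, $\mathbb{E}[\hat{\bm{g}}_m]=\nabla F(\bm{w}_m)$; hence the inner-product term collapses to $\left\Vert \nabla F(\bm{w}_m)\right\Vert^2$. For the second moment I would use the bias--variance identity $\mathbb{E}\Vert \hat{\bm{g}}_m\Vert^2 = \Vert\nabla F(\bm{w}_m)\Vert^2 + \mathbb{E}\Vert \hat{\bm{g}}_m-\nabla F(\bm{w}_m)\Vert^2$, and then split the error as $\hat{\bm{g}}_m-\nabla F(\bm{w}_m) = (\hat{\bm{g}}_m-\bm{g}_m)+(\bm{g}_m-\nabla F(\bm{w}_m))$. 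The first term contributes exactly $\Delta^2$ by definition and is bounded via \emph{Theorem 1}, while the second contributes at most the variance $\delta^2$ from the \emph{Assumption}, giving $\mathbb{E}\Vert \hat{\bm{g}}_m-\nabla F(\bm{w}_m)\Vert^2 \leq \Delta^2+\delta^2$.

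Substituting these two bounds back and rearranging isolates the gradient norm:
\begin{align}
\left(\eta - \frac{L\eta^2}{2}\right)\mathbb{E}\left[\left\Vert\nabla F(\bm{w}_m)\right\Vert^2\right] \leq \mathbb{E}[F(\bm{w}_m)] - \mathbb{E}[F(\bm{w}_{m+1})] + \frac{L\eta^2}{2}\left(\Delta^2+\delta^2\right).
\end{align}
I would then sum over $m=0,\dots,M-1$ so the right-hand side telescopes to $F(\bm{w}_0)-\mathbb{E}[F(\bm{w}_M)]$ plus $M$ copies of the distortion term, and finally divide by $M\left(\eta-\tfrac{L\eta^2}{2}\right)$. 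The condition $\eta<\tfrac{2}{L}$ guarantees $\eta-\tfrac{L\eta^2}{2}>0$, so the division preserves the inequality; simplifying $\tfrac{L\eta^2/2}{\eta-L\eta^2/2}=\tfrac{L\eta}{2-L\eta}$ yields the stated bound.

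The step I expect to be the main obstacle is justifying that the cross term $\mathbb{E}\left[\langle \hat{\bm{g}}_m-\bm{g}_m,\, \bm{g}_m-\nabla F(\bm{w}_m)\rangle\right]$ vanishes, which requires that the AirComp distortion (driven by the channel estimation error $v_k$, the truncation event, and the additive noise $\bm{z}_m$) be statistically independent of, or at least conditionally uncorrelated with, the mini-batch sampling error $\bm{g}_m-\nabla F(\bm{w}_m)$. I would argue this from the independence of the wireless randomness and the data-sampling randomness, so that the clean additive split $\Delta^2+\delta^2$ is legitimate; the remaining manipulations are routine telescoping and the positivity argument from the step-size constraint.
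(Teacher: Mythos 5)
Your proposal is correct and follows essentially the same route as the paper's Appendix D: the descent lemma from $L$-smoothness, unbiasedness of $\hat{\bm{g}}_m$ via \emph{Lemma 1} to collapse the inner product, a variance decomposition of $\hat{\bm{g}}_m-\nabla F(\bm{w}_m)$ into the AirComp distortion $\Delta^2$ and the sampling variance $\delta^2$, and telescoping under $\eta<\frac{2}{L}$. The cross-term concern you flag is exactly the step the paper passes over silently in its equality (c), and your justification via independence of the channel/noise randomness from the mini-batch sampling is the right one.
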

\begin{proof}
    Please refer to Appendix D.\hfill $\square$
\end{proof}

This theorem implies that the convergence is guaranteed with sufficiently large $M$ and the gap to the optimality converges to $\frac{L\eta (\Delta^2+\delta^2)}{2-L\eta}$, which linearly increasing with respect to $\Delta^2$.

\subsection{Optimization of the Truncation Threshold}

According to the result in \emph{Theorem 1}, we find that he impacts of learning algorithms and the wireless transmission are decoupled. Hence, the truncation threshold optimization can be isolated from the specific learning algorithms and parameters, thus being defined as follows:
\begin{align}\label{opt}
\max_{\gamma_{\mathrm{th}}>0}\quad h(\gamma_{\mathrm{th}})\triangleq  \mathrm{e}^{\gamma_{\mathrm{th}}}-k_1 \mathrm{Ei}(-\gamma_{\mathrm{th}})\mathrm{e}^{2\gamma_{\mathrm{th}}}+k_2\frac{\mathrm{e}^{2\gamma_{\mathrm{th}}}}{\gamma_{\mathrm{th}}},
\end{align}
where $k_1\!\triangleq\!\frac{1-\rho^2}{2\rho^2}$, and $k_2\!\triangleq\!\frac{\sigma^2 \max_k \left \{d_k^\alpha \right\}}{2P_{\max} \rho^2}$ are positive~constants.

\begin{theorem}
The objective function in (\ref{opt}) is convex.
\end{theorem}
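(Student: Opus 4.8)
The plan is to prove convexity directly, by establishing $h''(\gamma)\ge 0$ for every $\gamma>0$, where I write $\gamma\equiv\gamma_{\mathrm{th}}$ for brevity. First I would eliminate the exponential integral in favour of the better-behaved function $E_1(x)\triangleq\int_x^\infty \frac{\mathrm{e}^{-t}}{t}\,\mathrm{d}t$ via the identity $\mathrm{Ei}(-\gamma)=-E_1(\gamma)$, which recasts the objective as
\[
h(\gamma)=\mathrm{e}^{\gamma}+k_1 E_1(\gamma)\mathrm{e}^{2\gamma}+k_2\frac{\mathrm{e}^{2\gamma}}{\gamma},
\]
a sum of three terms with the positive weights $k_1,k_2>0$. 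Since a nonnegative combination of convex functions is convex and $(0,\infty)$ is itself convex, it suffices to verify convexity of each summand separately.

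The two outer terms are routine. The term $\mathrm{e}^{\gamma}$ is trivially convex, and for $\mathrm{e}^{2\gamma}/\gamma$ a direct computation gives a second derivative proportional to $\mathrm{e}^{2\gamma}\gamma^{-3}(2\gamma^2-2\gamma+1)$, whose quadratic factor has negative discriminant and is therefore strictly positive. The substantive work lies in the middle term $f(\gamma)\triangleq E_1(\gamma)\mathrm{e}^{2\gamma}$. Using $E_1'(x)=-\mathrm{e}^{-x}/x$ and differentiating twice, I would arrive at
\[
f''(\gamma)=\frac{\mathrm{e}^{\gamma}(1-3\gamma)}{\gamma^2}+4E_1(\gamma)\mathrm{e}^{2\gamma}.
\]

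The main obstacle is that the elementary part $\mathrm{e}^{\gamma}(1-3\gamma)/\gamma^2$ is negative for $\gamma>1/3$, so the sign of $f''$ cannot be decided termwise and hinges on a sufficiently sharp lower estimate of the transcendental term. The key step is the bound $E_1(x)\ge \mathrm{e}^{-x}/(x+1)$, which I would obtain from the representation $\mathrm{e}^{x}E_1(x)=\int_0^\infty \frac{\mathrm{e}^{-u}}{x+u}\,\mathrm{d}u=\mathbb{E}\!\left[1/(x+U)\right]$ with $U$ unit-mean exponential, followed by Jensen's inequality on the convex map $u\mapsto 1/(x+u)$. Inserting this bound yields
\[
f''(\gamma)\ge \mathrm{e}^{\gamma}\!\left(\frac{1-3\gamma}{\gamma^2}+\frac{4}{\gamma+1}\right)=\mathrm{e}^{\gamma}\frac{(\gamma-1)^2}{\gamma^2(\gamma+1)}\ge 0,
\]
the decisive feature being that the combined numerator collapses to a perfect square. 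Summing the three convex pieces then gives $h''(\gamma)\ge\mathrm{e}^{\gamma}>0$, so $h$ is strictly convex on $(0,\infty)$, which proves the claim. I anticipate the only delicate point to be the tightness of the $E_1$ estimate: the value $\mathrm{e}^{-x}/(x+1)$ sits exactly at the borderline where the quadratic degenerates to $(\gamma-1)^2$, so any materially weaker lower bound would leave $f''$ of indefinite sign, and confirming that the Jensen bound attains this borderline is the crux of the argument.
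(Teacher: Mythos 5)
Your proof is correct and follows essentially the same route as the paper: both reduce the question to the sign of the second derivative, observe that the elementary pieces $\mathrm{e}^{\gamma}$ and $k_2\mathrm{e}^{2\gamma}(2\gamma^2-2\gamma+1)/\gamma^3$ are positive, and then control the exponential-integral piece via the lower bound $E_1(x)=-\mathrm{Ei}(-x)\ge \mathrm{e}^{-x}/(x+1)$, which makes the remaining numerator collapse to $(\gamma-1)^2/(\gamma+1)\ge 0$. The only difference is cosmetic: the paper cites this bound from Abramowitz--Stegun (Eq.\ 5.1.20), whereas you derive it self-containedly by applying Jensen's inequality to the representation $\mathrm{e}^{x}E_1(x)=\mathbb{E}\left[1/(x+U)\right]$ with $U$ unit-mean exponential, which is a nice touch but does not change the argument.
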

\begin{proof}
Please refer to Appendix E. \hfill $\square$
\end{proof}

Based on convexity of $h(\cdot)$, the optimal value of $\gamma_{\mathrm{th}}$ can be easily obtained from a bisection method with low complexity. Specifically, the  derivative of $h(x)$ is
\begin{align}
h^\prime(x)=\mathrm{e}^x-k_1 \frac{\mathrm{e}^x}{x}-2k_1\mathrm{Ei}(-x)\mathrm{e}^{2x}+k_2 \mathrm{e}^{2x}\frac{2x-1}{x^2}.
\end{align}
Since $\lim_{x\to0}h^\prime(x)<0$ and $\lim_{x\to\infty}h^\prime(x)>0$, the unique zero point of $h^\prime(x)$, i.e., the optimal solution of $\gamma_{\mathrm{th}}$, can be found through the bisection search.

\section{Simulation Results}
In this section, we provide simulation results to verify the performance analysis and truncation threshold optimization. We train a multi-layer perceptron (MLP) on the popular MNIST dataset via the AirFL algorithm. The distance $d_k$ is uniformly distributed over $(0,500)\,$m. Unless otherwise specified, the other parameters are set as: $K=10$, $\alpha=2.2$, $P_{\max}=0.1$ W, $\sigma^2=-40$ dBm, and  $\eta=0.005$.

Fig. \ref{fig2} depicts the numerical variance of $\xi_k$ obtained from Monte-Carlo simulations, compared with the theoretical result in (\ref{e5}). It shows that the numerical results matches well with the theoretical results, which verifies our analysis. Moreover, the channel estimation accuracy level, $\rho$, imposes more significant impacts on the variance than the truncation threshold.

\begin{figure}[!t]
\centering
\includegraphics[width=6in]{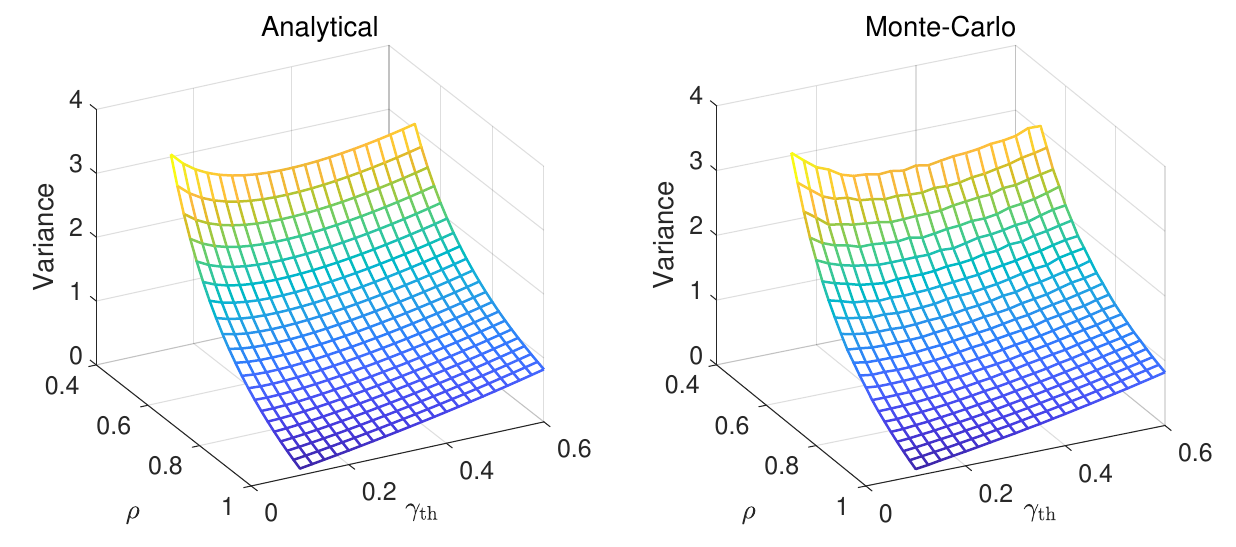}
\caption{ Variance versus $\rho$ and $\gamma_{\mathrm{th}}$.}\label{fig2}
\vspace{-0.45cm}
\end{figure}
\begin{figure}[!t]
\centering
\includegraphics[width=6in]{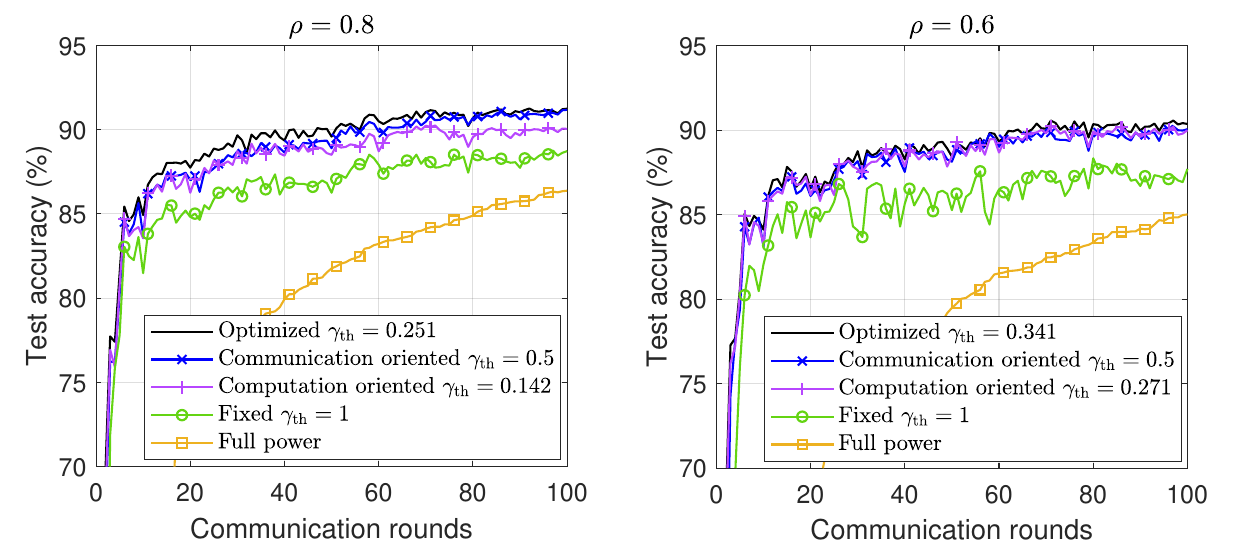}
\caption{ Test accuracy versus different truncation thresholds.}\label{fig3}
\vspace{-0.45cm}
\end{figure}
In Fig. \ref{fig3}, we evaluate the impact of truncation threshold optimization on learning performance and compare it with other schemes.  The four benchmark schemes are described as: ``Communication oriented" and ``Computation oriented" schemes represent $\gamma_{\mathrm{th}}$ is optimized to minimize the noise related and computation related term in (\ref{e11}), respectively; ``Fixed $\gamma_{\mathrm{th}}$" represents  the truncation threshold is set as a constant \cite{zhu}; The ``Full power" scheme represents the transmitter does not perform power control and only compensates channel phase offset \cite{fullp}. For all the tested setups, the proposed optimization method outperforms all the benchmarks due to the joint consideration of communication and computation. It is observed that the test accuracy first improves and then deteriorates with $\gamma_{\mathrm{th}}$. This is because with the increase of $\gamma_{\mathrm{th}}$, the performance is first limited by noise and then limited by less participating devices. Also, for larger $\rho$, a lower truncation threshold is preferred. Moreover, compared with the full power scheme, the proposed power control strategy successfully alleviates the impact of data heterogeneity, leading to a much prominent performance gain.

\section{Conclusion}
In this paper, we theoretically analyzed the performance of AirFL with imperfect CSI and optimized a channel truncation strategy. The analytical results revealed the importance of accurate channel estimation for AirFL. Our results can also be extended to performance analysis and optimization for other power control schemes of AirFL.

\appendices
\section{Proof of Lemma 1}
To determine $\lambda$, we start with the expectation of $\xi_k$, expressed as 

\begin{align}\label{a0}
\mathbb{E}\left [\xi_k\right ]\!=\!\lambda \mathbb{E}\left [\! \left.\frac{\Re\{h_k^* \hat{h}_k\}}{\vert \hat{h}_k \vert^2} \right \vert \vert \hat{h}_k \vert^2\geq \gamma_{\mathrm{th}} \!\right ]\!\Pr\left \{ \!\vert \hat{h}_k \vert^2\geq \gamma_{\mathrm{th}}\!\right\},
\end{align}
which should be equal to $1$ to guarantee an unbiased gradient estimation in (\ref{ghat}).
Considering that  $h_k$ and its estimate $\hat{h}_k$ are correlated, we first introduce a new random variable to tackle with this difficulty, which follows
\begin{align}\label{a1}
x\triangleq \frac{1}{\sqrt{1-\rho^2}}\left ( \frac{\Re\{h_k^* \hat{h}_k\}}{\vert \hat{h}_k \vert^2} -\rho \right)=  \frac{\Re\{v_k^* \hat{h}_k\}}{\vert \hat{h}_k \vert^2},
\end{align}
where $v_k$ and $\hat{h}_k$ are uncorrelated Gaussian variables with zero mean and unit variance. Then, we have
\begin{align}
\mathbb{E}\left [ x\left \vert \vert \hat{h}_k \vert^2\geq \gamma_{\mathrm{th}} \right.\!\right ]\!=\!\mathbb{E}\!\left [ \!\left.\frac{v_k^* \hat{h}_k+v_k \hat{h}_k^*}{2\vert \hat{h}_k \vert^2} \right \vert \vert \hat{h}_h \vert^2\!\geq\! \gamma_{\mathrm{th}} \!\right ]\!=\!0.
\end{align}
Then, by comparing (\ref{a0}) and (\ref{a1}), through some linear transformations, we arrive at
$\mathbb{E}\left [\xi_k\right ]= \lambda \mathrm{e}^{-\gamma_{\mathrm{th}}}\rho=1$,
which implies that $\lambda=\mathrm{e}^{\gamma_{\mathrm{th}}}/\rho$ and the proof completes. 

\section{Proof of Lemma 2}
We derive the variance of $\xi_k$ by using the form of conditional expectation as 
\begin{align} \label{a3}
&\mathbb{E}\left [(\xi_k -1)\right ]^2 = \mathbb{E}\left [ \left.\left(\frac{\Re\{h_k^* \hat{h}_k\}\mathrm{e}^{\gamma_{\mathrm{th}}}}{\vert \hat{h}_k \vert^2 \rho}-1\right)^2 \right \vert \vert \hat{h}_k \vert^2\geq \gamma_{\mathrm{th}} \right ] \Pr\left \{ \vert \hat{h}_k \vert^2\geq \gamma_{\mathrm{th}}\right\}+\Pr\left \{ \vert \hat{h}_k \vert^2< \gamma_{\mathrm{th}}\right\}\nonumber \\
&\quad= \frac{\mathrm{e}^{\gamma_{\mathrm{th}}}\!(1\!-\!\rho^2)}{\rho^2}\mathbb{E}\!\left [\! \left.\left(x\!-\!c\right)^2 \right \vert y\!\leq\! -\gamma_{\mathrm{th}} \!\right ]+1-\!\mathrm{e}^{-\gamma_{\mathrm{th}}},
\end{align}
where $y\!\triangleq\! -\vert \hat{h}_k \vert^2$ and $c\!\triangleq \!\frac{\rho\left(1-\mathrm{e}^{\gamma_{\mathrm{th}}}\right)}{\mathrm{e}^{\gamma_{\mathrm{th}}}\sqrt{1-\rho^2}}$. To calculate the conditional expectation, we first need to find the joint distribution of $x$ and $y$. The joint cumulative distribution function (CDF) of  $x$ and $y$ equals
\begin{align}\label{a2}
F_{xy}(t,\gamma)&=\Pr \left \{ \frac{\Re\{v_k^* \hat{h}_k\}}{\vert \hat{h}_k \vert^2} <t,\,-\vert \hat{h}_k \vert^2 <\gamma \right \} \nonumber \\
&=\Pr \left \{ v_k^* \hat{h}_k+ v_k \hat{h}_k^* -2t\vert \hat{h}_k \vert^2<0, \,-\vert \hat{h}_k \vert^2 <\gamma \right \} \nonumber \\
&=\Pr \left \{ \bm{z}^H \bm{A}_1 \bm{z} <0,\, \bm{z}^H \bm{A}_2 \bm{z} <\gamma \right \},
\end{align}
where $\bm{z}\triangleq [\hat{h}_k, v_k]^H$, and 
\begin{align}
\bm{A}_1\triangleq \left [
\begin{array}{cc}
-2t& 1\\1 & 0
\end{array}\right],\enspace 
\bm{A}_2\triangleq \left [
\begin{array}{cc}
-1& 0\\0 & 0
\end{array}\right].
\end{align}
According to \cite[Eq. (3.2c.5)]{quad}, the joint moment generating function (MGF) of the two quadratic forms, $z_1\triangleq \bm{z}^H \bm{A}_1 \bm{z}$ and $z_2\triangleq \bm{z}^H \bm{A}_2 \bm{z}$, follows
\begin{align}
\mathcal{M}_{z_1,z_2}(s_1,s_2)=\mathrm{det} \left ( \bm{I}-s_1 \bm{A}_1 -s_2\bm{A}_2 \right)^{-1}.
\end{align}
Applying the inverse Laplace transformation, we express the probability in (\ref{a2}) as
\begin{align}\label{a23}
\Pr &\left \{ \bm{z}^H \bm{A}_1 \bm{z} <0,\, \bm{z}^H \bm{A}_2 \bm{z} <\gamma \right \}\nonumber \\
&=\frac{1}{(2\pi i)^2} \int_{\epsilon_1-i\infty }^{\epsilon_1+i\infty} \int_{\epsilon_2-i\infty }^{\epsilon_2+i\infty} \frac{\mathrm{e}^{\gamma s_2 }}{s_1 s_2} \mathcal{M}_{z_1,z_2}(s_1,s_2) \mathrm{d} s_2 \mathrm{d} s_1 \nonumber \\
&\overset{\text{(a)}}{=}  \frac{1}{2\pi i}\int_{\epsilon_1-i\infty }^{\epsilon_1+i\infty} \frac{1}{s_1(1+2ts_1-s_1^2)}\mathrm{d} s_1 - \frac{1}{2\pi i}\int_{\epsilon_1-i\infty }^{\epsilon_1+i\infty} \frac{1}{s_1(1+2ts_1-s_1^2)}\mathrm{e}^{-(1+2ts_1-s_1^2)\gamma}\mathrm{d} s_1
\nonumber \\
&\overset{\text{(b)}}{=}\frac{t+\sqrt{1+t^2}}{2\sqrt{1+t^2}}+   \frac{1}{2\pi i}\int_{\epsilon-i\infty }^{\epsilon+i\infty} \frac{\left(t+\sqrt{t^2+s+1}\right)\mathrm{e}^{\gamma s}}{2s(s+1)\sqrt{t^2+s+1}}\mathrm{d} s \nonumber
\nonumber \\
&\overset{\text{(c)}}{=}\frac{t}{2\sqrt{1+t^2}}\left (1-\mathrm{Erf}\left(\sqrt{-\gamma(1+t^2)} \right) U(-\gamma)\right)+\frac{\mathrm{e}^{\gamma}}{2}\mathrm{Erfc}\left(-\sqrt{-\gamma}t\right)U(-\gamma),
\end{align}
where (a) comes from Eq. (5.2.4) in supplements of \cite{laplace}, (b) exploits the Cauchy's residue theorem and $s\triangleq s_1^2 -2ts_1-1$, (c) is due to Eq. (5.3.7) in supplements of \cite{laplace}, $\mathrm{Erf}(\cdot)$, $\mathrm{Erfc}(\cdot)$, and $U(\cdot)$ represent the error function, the complementary error function and the Heaviside function, respectively. From (\ref{a2}) and (\ref{a23}) and by taking the derivative of joint CDF, the joint probability density function (PDF) of $x$ and $y$ equals
\begin{align}\label{pdf}
f_{xy} (t,\gamma)=\sqrt{-\frac{\gamma}{\pi}}\mathrm{e}^{\gamma(1+t^2)}U(-\gamma).
\end{align}
From (\ref{pdf}), we calculate the conditional expectation in (\ref{a3}) as 
\begin{align}\label{a4}
\mathbb{E}&\left [ \left.\left(x-c\right)^2 \right \vert y\leq -\gamma_{\mathrm{th}} \right ]
=\int_{-\infty}^{\infty} \frac{\int_{-\infty}^{-\gamma_{\mathrm{th}}}f_{xy} (t,\gamma) \mathrm{d}\gamma }{\Pr\left \{ y\leq -\gamma_{\mathrm{th}}\right \}} \mathrm{d}t \nonumber \\
&=\mathrm{e}^{\gamma_{\mathrm{th}}} \int_{-\infty}^{-\gamma_{\mathrm{th}}}\int_{-\infty}^{\infty}(t-c)^2\sqrt{-\frac{\gamma}{\pi}}\mathrm{e}^{\gamma(1+t^2)}\mathrm{d}t \mathrm{d}\gamma \nonumber \\
&\overset{\text{(a)}}{=}\mathrm{e}^{\gamma_{\mathrm{th}}} \int_{-\infty}^{-\gamma_{\mathrm{th}}} \frac{2c^2\gamma-1}{2\gamma}\mathrm{e}^{\gamma}\mathrm{d}\gamma \nonumber \\
&\overset{\text{(b)}}{=}  c^2 -\frac{1}{2}\mathrm{Ei}(-\gamma_{\mathrm{th}})\mathrm{e}^{\gamma_{\mathrm{th}}},
\end{align}
where (a) exploits \cite[Eq. (3.462.8)]{table} and \cite[Eq. (3.321.1)]{table} and the fact that $\int_{-\infty}^{\infty}t\mathrm{e}^{\gamma t^2}\mathrm{d}t=0$. The equality in (b) comes from the definition of the exponential integral function, $\mathrm{Ei}(\cdot)$. Applying (\ref{a4}) into (\ref{a3}), we complete the proof.

\section{Proof of Theorem 1}
The weight divergence is reformulated as 
\begin{align}\label{a5}
\mathbb{E}&\left[ \left \Vert \hat{\bm{g}}_m -\bm{g}_m \right \Vert^2 \right]=\mathbb{E}\left[ \left \Vert \frac{1}{K}\sum_{k=1}^K (\xi_k-1)\bm{g}_m^k +\bar{\bm{z}}_m \right \Vert^2 \right]\nonumber \\
&\overset{\text{(a)}}{=}\frac{1}{K^2}\sum_{k=1}^K \mathbb{E}\left[(\xi_k-1)^2 \right ]\mathbb{E}\left[ \left \Vert\bm{g}_m^k \right \Vert^2 \right] +\mathbb{E}\left[ \left \Vert \bar{\bm{z}}_m \right \Vert^2\right ],
\end{align}
where (a) is due to the zero mean and independence between $\xi_k -1$, $\forall k$.  As for the noise term, recall that $\bar{\bm{z}}_m=\frac{\Re\{\bm{z}_m\}}{\zeta}$ and we have $\mathbb{E}\left[ \left \Vert \bar{\bm{z}}_m \right \Vert^2\right ]=\frac{\sigma^2}{2\zeta^2}$.
According to the transmit power constraint in (\ref{pmax}), the scaling factor $\zeta$ must satisfy
\begin{align}\label{a28}
\max_{k\in \mathcal{S}_m} \left\{\frac{\zeta^2 \lambda^2 d_k^{\alpha} }{K^2 \vert \hat{h}_k \vert^2}\mathbb{E}\left[ \left \Vert\bm{g}_m^k \right \Vert^2 \right] \right\} \leq P_{\max}. 
\end{align}
Note that for all $k\!\in\!\mathcal{S}_m$, we have $\vert \hat{h}_k \vert^2\!\geq\! \gamma_{\mathrm{th}}$. Combining (\ref{a28}) with the value of $\lambda$ and the bound assumption of $\mathbb{E}\left[\! \left \Vert\bm{g}_m^k \right \Vert^2 \!\right]\!\leq\!G^2$, $\zeta$ is set as $\zeta\!=\!\!\frac{K \rho \sqrt{P_{\max}\gamma_{\mathrm{th}}}} {G\!\max_k\!\left \{d_k^{\alpha/2}\right\}\mathrm{e}^{\gamma_{\mathrm{th}}}}$.
Then, combining all the derived results, we obtain (\ref{e11}) and complete the proof.

\section{Proof of Theorem 2}
Under the general assumption, we have
\begin{align} \label{eq3}
    \mathbb{E}&\left[F(\bm{w}_{m+1})-F(\bm{w}_m)\right]\nonumber \\
    &\overset{\text{(a)}}{\leq} \mathbb{E}\left[-\eta (\nabla F(\bm{w}_m))^T \hat{\bm{g}}_m+\frac{L\eta^2 }{2} \Vert \hat{\bm{g}}_m \Vert^2 \right ] \nonumber \\
    &\overset{\text{(b)}}{=}-\eta \mathbb{E}\left[\left\Vert \nabla F(\bm{w}_m) \right \Vert^2 \right] +\frac{L\eta^2}{2} \mathbb{E}\left[\left\Vert \hat{\bm{g}}_m -\bm{g}_m+ \bm{g}_m-\nabla F(\bm{w}_m)+\nabla F(\bm{w}_m) \right \Vert^2\right] \nonumber \\
    &\overset{\text{(c)}}{=}-\left(\eta-\frac{L\eta^2}{2}\right)\mathbb{E}\left[\left\Vert \nabla F(\bm{w}_m) \right \Vert^2 \right] + \frac{L\eta^2}{2} \mathbb{E}\left[ \left\Vert \hat{\bm{g}}_m -\bm{g}_m\right \Vert^2\right]+\frac{L\eta^2}{2} \mathbb{E}\left[\left\Vert  \bm{g}_m-\nabla F(\bm{w}_m) \right \Vert^2\right] \nonumber \\
    &\overset{\text{(d)}}{\leq} -\left(\eta-\frac{L\eta^2}{2}\right)\mathbb{E}\left[\left\Vert \nabla F(\bm{w}_m) \right \Vert^2 \right]+ \frac{L\eta^2 (\Delta^2+\delta^2)}{2},
\end{align}
where (a) is due to the fact that $F(\cdot)$ is $L$-Lipschitz and the definition of $\bm{w}_{m+1}$, (b) comes from \emph{Lemma 1}, (c) uses the assumption, and (d) exploits \emph{Theorem 1}.
By summing (\ref{eq3}) from $m=0$ to $m=M-1$, we have 
\begin{align}
    \frac{1}{M}\sum_{m=0}^{M-1} \mathbb{E}\left[\left\Vert \nabla F(\bm{w}_m) \right \Vert^2 \right] \leq \frac{1}{M}\left(\frac{F(\bm{w}_0)-\mathbb{E}\left[ F(\bm{w}_{M})\right]}{\eta-\frac{L\eta^2}{2}} +\frac{ML\eta (\Delta^2+\delta^2)}{2-L\eta} \right),
\end{align}
which holds for any small $\eta<\frac{2}{L}$. 

\section{Proof of Theorem 3}
We check the second derivative of $h(x)$ as 
\begin{align}\label{e30}
h^{\prime\prime}(x)=& \mathrm{e}^{x}+ \frac{k_1 \mathrm{e}^{x}}{x^2}\left(-4x^2 \mathrm{e}^{x}\mathrm{Ei}(-x)-3x+1\right)+\frac{2k_2\mathrm{e}^{x}(2x^2-2x+1)}{x^3}.
\end{align}
It is obvious that the first and the third terms in (\ref{e30}) are positive for positive $x$. To prove the nonnegativity of the second term, we use the inequalities \cite[Eq. (5.1.20)]{ei}
\begin{align}
-\mathrm{Ei}(-x)>\frac{1}{2}\mathrm{e}^{-x}\mathrm{ln}\left(1+\frac{2}{x}\right)> \mathrm{e}^{-x}\frac{1}{x+1},
\end{align}
which yields to
$-4x^2 \mathrm{e}^{x}\mathrm{Ei}(-x)-3x+1>\frac{(x-1)^2}{x+1}\geq 0$.
Then, we conclude that $h^{\prime\prime}(x)>0$ for $x>0$ and hence $h(x)$ is convex.


\begin{thebibliography}{1}
\bibliographystyle{IEEEtran}


\bibitem{xu}
W. Xu \emph{et al.}, ``Edge learning for B5G networks with distributed signal processing: Semantic communication, edge computing, and wireless sensing," \emph{IEEE J. Sel. Topics Signal Process.}, vol. 17, no. 1, pp. 9--39, Jan. 2023.



\bibitem{push}
G. Zhu \emph{et al.}, ``Pushing AI to wireless network edge: An overview on
integrated sensing, communication, and computation towards 6G," \emph{Sci. China Inf. Sci.}, vol. 66, no. pp. 130301:1--19, Feb. 2023.


\bibitem{energy}
Z. Yang \emph{et al.}, ``Energy efficient federated learning over wireless communication networks," \emph{IEEE Trans. Wireless Commun.}, vol. 20, no. 3, pp. 1935--1949, Mar. 2021.
 
\bibitem{ajoint}
K. Yang \emph{et al.}, "Federated learning via over-the-air computation,"  \emph{IEEE Trans. Wireless Commun.}, vol. 19, no. 3, pp. 2022--2035, Mar. 2020.


\bibitem{zhu}
G. Zhu, Y. Wang, and K. Huang, ``Broadband analog aggregation for low-latency federated edge learning," \emph{IEEE Trans. Wireless Commun.}, vol. 19, no. 1, pp. 491--506, Jan. 2020.

\bibitem{zhu2}
X. Cao \emph{et al.}, ``Transmission power control for over-the-air federated averaging at network edge," \emph{IEEE J. Sel. Areas Commun.}, vol. 40, no. 5, pp. 1571--1586, May 2022.

\bibitem{guo}
W. Guo  \emph{et al.}, ``Joint device selection and power control for wireless federated learning," \emph{IEEE J. Sel. Areas Commun.}, vol. 40, no. 8, pp. 2395--2410, Aug. 2022.

\bibitem{onebit}
G. Zhu \emph{et al.}, ``One-bit over-the-air aggregation for communication-efficient federated edge learning: Design and convergence analysis," \emph{IEEE Trans. Wireless Commun.}, vol. 20, no. 3, pp. 2120--2135, Mar. 2021.


\bibitem{zhang}
W. Zhang \emph{et al.}, ``Worst-case design for RIS-aided over-the-air computation with imperfect CSI," \emph{IEEE Commun. Lett.}, vol. 26, no. 9, pp. 2136--2140, Sept. 2022.

\bibitem{gomore}
J. Yao \emph{et al.}, ``GoMORE: Global model reuse for rescource-constrained wireless federated learning," \emph{IEEE Wireless Lett.}, early access, 2023. Doi: 10.1109/LWC.2023.3281881.

\bibitem{fullp}
 X. Fan \emph{et al.}, "BEV-SGD: Best effort voting SGD against byzantine attacks for analog-aggregation-based federated learning over the air," \emph{IEEE Internet Things J.}, vol. 9, no. 19, pp. 18946--18959, Oct. 2022.


\bibitem{quad}
A. Mathai and S. B. Provost, \emph{Quadratic Forms in Random Variables: Theory and Applications.} New York, NY, USA: Marcel Dekker, 1992.

\bibitem{laplace}
A. Polyanin and A. Manzhirov, \emph{Handbook of Integral Equations: Second Edition,} Handbooks of Mathematical Equations. Taylor \& Francis, 2008.

\bibitem{table}
I. S. Gradshteyn and I. M. Ryzhik, \emph{Table of Integrals, Series and Products.} New York: Academic Press, 6th ed, 2000.

\bibitem{ei}
M. Abramowitz and I. A. Stegun, \emph{Handbook of Mathematical Functions with Formulas, Graphs, and Mathematical Tables.} New York, NY, USA: Academic, 1972.


\end{thebibliography}
\end{document}